\newcommand{\prgph}[1]{\smallbreak\noindent{\bf\emph{#1}}}
\newcommand{\id}{\textsl{id}}
\newcommand{\idv}{\textsl{id}_{virt}}
\newcommand{\idvl}[1]{\textsl{id}_{loc}^{\,#1}}
\newcommand{\snap}{\texttt{snapshot}}
\newcommand{\local}{{\rm \textsf{LOCAL}}}
\newcommand{\slocal}{\textsf{SLOCAL}}
\newcommand{\congest}{\textsf{CONGEST}}
\newcommand{\dcpld}{{\rm \textsf{DECOUPLED}}}
\newcommand{\async}{{\rm \textsf{ABST-DECOUPLED}}}
\newcommand{\cent}{\textsf{CENTLOCAL}}
\renewcommand{\time}{\textsl{time}}
\newcommand{\dist}{\textsl{dist}}
\newcommand{\alg}{\textsf{\footnotesize ALG}}
\newcommand{\cA}{\mathcal{A}}
\newcommand{\cB}{\mathcal{B}}
\newcommand{\cF}{\mathcal{F}}
\DeclareMathOperator*{\argmin}{arg\,min}
\newtheorem{theorem}{Theorem}[section]
\newtheorem{lemma}{Lemma}[section]
\newtheorem{claim}{Claim}[section]
\newtheorem{definition}{Definition}[section]
\newcommand*{\affmark}[1][*]{\textsuperscript{#1}}
\begin{document}

\title{Distributed Computing in the Asynchronous LOCAL model\thanks{This research is supported by the ANR project DESCARTES (ref. DS0702-2016). Additional support from the INRIA project GANG. }}
\author{Carole Delporte-Gallet\affmark[1], Hugues Fauconnier\affmark[1], Pierre Fraigniaud\affmark[1], Mika\"el Rabie\affmark[2]\\~\\
\small \affmark[1]Institut de Recherche en Informatique Fondamentale \\
\small CNRS and Universit\'e de Paris, France   \\
\small {firstname.lastname@irif.fr}\\
\small \affmark[2]LIP6, Sorbonne Universit\'e, Paris, France \\
\small {mikael.rabie@lip6.fr}}
\date{}

\maketitle

\begin{abstract}
The \local\/ model is among the main models for studying locality in the framework of distributed network computing. This model is however subject to pertinent criticisms, including the facts that all nodes wake up simultaneously, perform in lock steps, and are failure-free. We show that relaxing these hypotheses to some extent does not hurt local computing. In particular, we show that, for any task $T$ associated to a locally checkable labeling (\textsc{lcl}), if $T$ is solvable in $t$ rounds by a deterministic algorithm in the \local\/ model, then $T$ remains solvable by a deterministic algorithm  in $O(t)$ rounds in an asynchronous variant of the \local\/ model whenever $t=O(\textrm{polylog}\, n)$. This improves the result by Casta\~neda et al. [TCS, 2019], which was restricted to 3-coloring the rings. More generally, the main contribution of this paper is to show that, perhaps surprisingly, asynchrony and failures in the computations do not restrict the power of the  \local\/ model, as long as the communications remain synchronous and failure-free. To this end, this paper introduces a new distributed renaming technique to provide nodes with consistent identifiers.

\medbreak

\noindent\textbf{Keywords:} Distributed computing; network computing; locality. 
\end{abstract}


\section{Introduction}

\subsection{Locality in distributed network computing}
\label{subsec:locality}

Distributed network computing~\cite{P00} deals with the power and limitation of a collection of computing entities (a.k.a.~processes) occupying the nodes of a network, and exchanging messages along the links of this network.  In this framework, a primary interest has been placed on \emph{locality}, that is, determining what tasks can be solved whenever every process has to output after having exchanged information  with processes in its vicinity only, i.e., at bounded distance in the network. The \local\/ model~\cite{L92} has been extensively used for studying locality in network computing over the last 25~years~\cite{S13}. In this model, the network is modeled as a connected simple graph $G=(V,E)$, with processing nodes occupying the vertices of $G$, and communicating through the edges of $G$. Initially, every process is aware solely of its identity, which is supposed to be unique in the network. The \local\/ model is \emph{synchronous}: computation proceeds as a sequence of  \emph{rounds}, with all nodes starting at the same round. At each round, every node sends messages to its neighbors in~$G$, receives messages from its neighbors, and performs some individual computation. The round complexity of an algorithm is the number of rounds until all nodes output. For instance, a celebrated result in this context is Linial's lower bound~\cite{L92}  stating that 3-coloring the $n$-node ring requires at least $\frac12\log^*n-O(1)$ rounds. This bound is tight up to additive constants, thanks to Cole and Vishkin's algorithm~\cite{CV86}, which 3-colors the $n$-node ring in at most $\frac12\log^*n+O(1)$ rounds (see~\cite{RS15} for the exact constant additive factors). 

Moreover, the \local\/ model can be further simplified. Indeed, as pointed out in~\cite{L92}, a $t$-round algorithm $\cA$ in the \local\/ model can be simulated by another $t$-round algorithm $\cB$ which proceeds in two phases: first, each node collects all the data present at the nodes at distance at most $t$ around it, and, second,  each node individually simulates the behavior of the original algorithm $\cA$, without communication. In other words, a $t$-round algorithm in the \local\/ model can simply be viewed as a function from the ball $B_G(v,t)$ of radius $t$ around every node $v$ in $G$, to the output set. That is, a $t$-round algorithm in the \local\/ model can merely be viewed as: 
\begin{center}
\fbox{
\begin{minipage}{15cm}
\centerline{\sl Algorithm in the \local\/ model}
\begin{description}

\item[Phase 1:] Every node $v$ performs instruction $\snap(t)$, which returns to $v$ the structure of the ball $B_G(v,t)$ of radius $t$ centered at $v$ in $G$, with the identities of all the nodes in the ball. 
\item[Phase 2:] Node $v$ performs individual computations, taking solely as input the result of the snapshot instruction of Phase~1, and eventually outputs. 
\end{description}
\end{minipage}
}
\end{center}

This vision of the  \local\/ model  considerably simplifies the design of algorithms, and the analysis of the complexity of the problems. This probably explains why the  \local\/ model has been legitimately extensively used for a quarter of a century (see~\cite{S13} for a survey, and~\cite{BEG18,CLP18,CPS17,FHK16,GKM17,HSS18}  for a non-exhaustive selection of  more recent results). 

\subsection{Limits of the \local\/ model}

Despite all its positive aspects as far as studying local computing is concerned, the \local\/ model is subject to pertinent criticisms as far as practical applications are concerned. One such criticism is that the model assumes no bound on the computing power of the nodes, and on the throughput of the links. While this criticism is valid, it must be underlined that this apparent weakness of the model insures that lower bounds such as the one in~\cite{L92} are non conditional, i.e., they hold even if processes have infinite computing power, and even for full information protocols --- where every node is, at each round, forwarding all the knowledge that it accumulated during the previous rounds. Moreover, most of (but indeed not all) the upper bounds do not abuse this power, that is, most algorithms involve polynomial-time computation at the nodes, like the aforementioned Cole and Vishkin's algorithm~\cite{CV86}. Furthermore, the \congest\/ model~\cite{P00} has been designed especially for measuring the impact of limiting the bandwidth of the links for tasks involving high throughput, whose study in the context of the \local\/ model would be inappropriate. For instance, $C_4$ detection, i.e., determining whether the given network contains a cycle of length~4, is a trivial task in the \local\/ model, but it requires $\Theta(\sqrt{n})$ rounds to be solved in the \congest\/ model~\cite{DKO14}. 

Another criticism has been opposed to the \local\/ model: the fact that all nodes start at the same time, and proceed in lockstep. Indeed, in practice, the processes may proceed at different speed depending on various factors including heterogeneity of the CPUs, clock drifts, cache misses, poor load balancing, etc.  Moreover, the speeds of the different processes may vary with time, and processes may even be subject to all kinds of failures. In this context, the celebrated FLP theorem~\cite{FLP85} states that binary consensus cannot be solved in the asynchronous message passing systems, even if at most one process can crash. The argument opposed to the criticism about synchrony in the \local\/ model is the ability to use  \emph{synchronizers}~\cite{A85,AP90,PU89} for implementing synchronous algorithms in an asynchronous environment. However, while synchronizers are well suited to handle delays in the communications, they use \emph{waiting} mechanisms allowing each node to figure out when a round finishes. Such mechanisms are not suited for an environment in which processes can vary in speed and eventually crash. Indeed, waiting can cause deadlocks occurring when a process  waits forever for a process that has crashed. Instead, in asynchronous computing with an unbounded number of crashes, algorithms are required to be  \emph{wait-free}, that is, the algorithms must guarantee that every process can terminate and output correctly, independently from the behavior of the other processes. 

To sum up, one must admit that there is still a gap between the study of asynchronous crash-prone computing and the study of locality in network computing. The main issue addressed in the paper is therefore: \emph{Is making the \local\/ model slightly more realistic, e.g., by introducing some form of asynchrony, results in a weaker model, for which stronger lower bounds could be derived?} Surprisingly, the answer to this question is shown to be essentially negative: we show that introducing asynchrony in the computation {(while keeping synchronous communication between all nodes)} does not reduce much the power of the  \local\/ model, at least as far as the large class of tasks are concerned. 

\subsection{Decoupling computations from communications}

Casta\~neda et al.~\cite{CDFRR17}  has  initiated a line of work aiming at bridging the asynchrony-locality gap, by demonstrating that one can study locality in network computing even in the framework of asynchronous crash-prone processes. For this purpose, they introduced an asynchronous variant of the \local\/ model, called \dcpld, applied to symmetric networks (rings, toruses, etc.). This  latter model decouples the computing entities (processes) from the communicating entities (routers). The communications remain synchronous, that is, there is still a notion of rounds. However, the processes are fully asynchronous, and even subject to crash failures. In particular, the processes may wake up at different times. It is shown in~\cite{CDFRR17} that 3-coloring the $n$-node ring can still be done in $\frac12 \log^*n+O(1)$ rounds in the \dcpld\/ model.  I.e., for 3-coloring the $n$-node ring, it is sufficient that, upon wake up, every node $v$ gather messages produced during at most $\frac12 \log^*n+O(1)$ rounds, even if processes are fully asynchronous and subject to crash failures.

In this paper, we simplify the \emph{operational}  \dcpld\/ model from~\cite{CDFRR17}  into an \emph{abstract} model, called \async, that will be shown not stronger than the \dcpld\/ model (in symmetric networks), but easier to handle, and defined for all kinds of networks.  In a nutshell, one can view an algorithm in the  \async\/ model as performing in two phases, like in the standard \local\/ model (see Section~\ref{subsec:locality}). The first phase consists, for every awake process $v$, of taking a snapshot of the ball $B_G(v,t)$ of radius $t$ around $v$ in $G$. This snapshot returns the structure of the ball $B_G(v,t)$, and the identifiers of some processes in $B_G(v,t)$, depending on the wake up times of these processes. Roughly, for $v$ to get the identifier of a node $w\in B_G(v,t)$, it must be the case that $w$ woke up early enough so that a message from $w$ reaches $v$ no latter than $v$'s wakeup time, as this is the time at which $v$ takes its snapshot. The second phase consists of an individual computation performed at $v$, which eventually results in $v$ producing its output.  

The main difference between the \async\ model and the \local\/ model is the following. In the \async\/ model, if $w\in B_G(v,t)$ is not awake when $v$ is making its snapshot, then the identity of $w$ remains  unknown to~$v$. Instead, in the \local\/ model, a snapshot of $B_G(v,t)$ systematically returns the identifiers of all nodes in $B_G(v,t)$. 

Before presenting our results, we want to stress the fact that, in practice, the communication media (memory registers, wires, etc.)  can be viewed as synchronous. It is the \emph{access} to the these communication media that is asynchronous, due to various reasons, including scheduling and contention issues. The \dcpld\/ and \async\/ models are aiming at taking this phenomenon into account, as least some aspects of it, by keeping the communication synchronous, but assuming asynchronous computation. We also want to stress the fact that, as we shall discuss further in the text, although the \async\/ model includes a strong snapshot functionality, it is not stronger than the realistic operational \dcpld\/ model, at least as far as symmetric networks are concerned.  
 
\subsection{Our results}
\label{sec:ourresults}
 
We extend the results in~\cite{CDFRR17} to the entire class of \emph{locally checkable labeling} (LCL) tasks~\cite{NS95}, which is the main  center of interest of the research activities in the framework for local computing in networks. Many classical graph problems, e.g., vertex or edge-coloring, maximal matching, maximal independent set, minimal dominating set, etc., are LCL tasks. In a nutshell, an (input-free) LCL task is specified by a set $L$ of labels, and a family $\cF$ of balls with constant radius $r\geq 0$, in which every node is labeled by a label in $L$. For instance, $c$-coloring corresponds to the LCL task with $L=\{1,\dots,c\}$ and $\cF$ is the family of balls with radius~1, such that the label of the center is different from the labels of all its neighbors. 

In the  \async\/ model, solving an LCL task $(L,\cF)$ of radius~$r$ in a graph $G=(V,E)$ asks every correct process $v\in V$ to output a label in $L$ such that every ball $B_G(v,r)$ in which nodes corresponding to correct processes are labeled by their outputs, and nodes corresponding to processes that crashed are unlabeled, can be extended to a ball in $\cF$, by assigning labels in $L$ to the unlabeled nodes. 

We prove the following general result, which shows that, in the context of local computing, asynchronous crash-prone processes are essentially as efficient as reliable synchronous processes. A particular case of the statement below is when processes are initially aware of the size $n$ of the network, and that the identifiers are in $[0,n-1]$. 

\begin{lemma}\label{lem:main}
Let $(L,\cF)$ be an LCL task. Assume that,  in the \local\/ model, $(L,\cF)$ can be solved by a deterministic algorithm in $t(N)$ rounds in $n$-node graphs whenever the processes are initially aware of an upper bound  $N$ for~$n$, and that the identifiers are in the range $[0,N)$. Then,  in the \async\/ model,  $(L,\cF)$ can be solved by a deterministic algorithm  in at most $3\, t(N^2)$ rounds in $n$-node graphs, whenever the processes are initially aware of the upper bound  $N$ for $n$, and that the identifiers are in the range $[0,N)$.
\end{lemma}

In particular, every LCL task that can be solved in a polylogarithmic number of rounds in the \local\/ model, whenever  the processes are initially aware of an upper bound $N=O(poly(n))$ on the number of nodes, and  on the range of IDs,  can also be solved in a polylogarithmic number of rounds  in the \async\/ model. This applies to, e.g., maximal matching in arbitrary graphs~\cite{F17}, and $(\Delta+1)$-coloring graphs with constant max-degree~$\Delta$~\cite{FHK16}. In fact, this also holds for super-polylogarithmic round-complexities, such as the $2^{O(\sqrt{\log n})}$ upper bound for $(\Delta+1)$-coloring  $n$-node graphs with arbitrary max-degree~$\Delta$~\cite{PS92}. 

Since the \async\/ model will be shown not to be stronger than the \dcpld\/ model in symmetric graphs, we get the following, as a direct consequence of Lemma~\ref{lem:main}. 

\begin{theorem}\label{theo:main}
Let $(L,\cF)$ be an LCL task. Assume that,  in the \local\/ model, $(L,\cF)$ can be solved by a deterministic algorithm  in $t(N)$ rounds in symmetric $n$-node graphs whenever the processes are initially aware of an upper bound  $N$ for~$n$, and that the identifiers are in the range $[0,N)$. Then,  in the \dcpld\/ model,  $(L,\cF)$ can be solved by a deterministic algorithm in at most $3\, t(N^2)$ rounds in symmetric $n$-node graphs whenever the processes are initially aware of the upper bound  $N$ for~$n$, and that the identifiers are in the range $[0,N)$. 
\end{theorem}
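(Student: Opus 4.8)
The plan is to simulate the given synchronous algorithm $\alg$ inside the \async\ model. Write $t=t(N^2)$ and let every awakened process wait $3t$ rounds before it outputs; the task is then to reproduce, from each process's snapshot, the label that $\alg$ would assign, while coping with the two features absent from the \local\ model: processes wake up at arbitrary rounds, and a process only learns the identifier of a node in its ball if that node is already awake when the snapshot is taken.

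First I would replace the real identifiers by virtual identifiers living in $[0,N^2)$. For a process $v$ with identifier $\id(v)\in[0,N)$ I set $\idv(v)$ to be a pair whose low component is $\id(v)$ and whose high component is a quantity derived from $v$'s wake-up round. Putting $\id(v)$ in the low component makes the map $v\mapsto\idv(v)$ injective for free (real identifiers are already distinct), so that any assignment obtained is a legal \local\ instance over the range $[0,N^2)$, while the high component records wake-up information. The point of the wake-up component is to make the virtual identifiers \emph{consistently computable}: any process that sees $w$ learns both $\id(w)$ and the round at which $w$ woke up (the structure of the ball is always revealed, so distances are known and wake-up rounds can be recovered from arrival times), and hence computes the same $\idv(w)$. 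The simulation itself is then: each process $v$ reads its snapshot at round $\tau_v+3t$, builds the virtual ball $B_G(v,t)$ labelled with the virtual identifiers it has collected, runs $\alg$ with upper bound $N^2$ on this virtual ball, and outputs $\alg$'s label for $v$.

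The decisive structural fact I would use is the \emph{monotonicity of visibility}: because communication is synchronous and failure-free and information travels one hop per round, a process always sees every node of its ball that woke up no later than itself, and more generally a later-waking process sees everything that an earlier-waking process within its range could have seen. Thus along any path in $G$ the processes commit in an order compatible with their wake-up times, and a process that commits late still has access to all the data on which its earlier-committing neighbours based their decision, so that it can recompute those decisions.

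The heart of the proof, and the main obstacle, is the consistency lemma: for every radius-$r$ ball $B_G(v,r)$, the correct processes in it output labels that agree with a single legal \local\ instance, so that, leaving the crashed processes unlabelled, the partial labelling extends to a ball in $\cF$ (the extension simply reuses $\alg$'s labels on that instance). The difficulty is that two nearby processes have, a priori, different and asymmetric visible balls, and the gap between their wake-up rounds is unbounded, so a node absent from one visible ball may be present in another. This is exactly what the generous $3t$ waiting window is designed to absorb: I would argue that the slack of $2t$ beyond $\alg$'s radius $t$ guarantees that any two processes whose outputs are constrained against each other (distance at most $2r\ll t$) have visible virtual balls that coincide on the radius-$t$ neighbourhood that $\alg$ actually depends on, so each evaluates $\alg$ on the same instance around their common region and produces mutually consistent labels. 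Making this visibility bookkeeping precise — pinning down which nodes must be mutually visible, showing the common instance is a legal \local\ instance over $[0,N^2)$, and handling the boundary caused by very-late-waking correct processes (who, by monotonicity, see and can recompute their already-committed neighbours) — is where the real work lies, and I expect it to be the crux of the argument.
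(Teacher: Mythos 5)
There is a genuine gap, and it sits exactly at the point you identify as the crux. Your virtual identifier $\idv(w)$ is a function of $\id(w)$ and $\time(w)$ — but both are unknowable for a node that is still asleep (or that crashed before waking) when the snapshot is taken. A \local\/ simulation of $\cA$ at $v$ needs an identifier for \emph{every} node of $B_G(v,t)$, so the ``virtual ball labelled with the virtual identifiers it has collected'' is only partially labelled, and $\cA$ is simply not defined on such an input; your proposal never says how the sleeping nodes get identifiers, and under your scheme they cannot. The paper's key idea is precisely the opposite of yours: $\idv(w)$ is \emph{not} derived from $w$'s own data. It is set to $\idvl{w^*}(w)$, where $w^*$ is the earliest-waking process within distance $\tau=t(N^2)$ of $w$ (minimizing $\time(u)+\dist_G(u,w)$, ties broken by identifier), and $\idvl{w^*}$ enumerates $B_G(w^*,\tau)$ by a canonical traversal using only the \emph{structure} of the ball (port numbers), offset by $\id(w^*)\cdot N$. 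Consequently every node of $B_G(v,\tau)$ — awake, asleep, or crashed — has a well-defined virtual identifier, and a snapshot of radius $3\tau$ at $v$ contains, for each such $w$, the set $S_w(2\tau,\time(v))$, which suffices to recompute $w^*$ and hence $\idv(w)$; this is where the factor $3$ is actually spent.

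The consistency mechanism you propose in its place would also fail. It is not true that two processes at distance at most $2r$ have visible balls that ``coincide on the radius-$t$ neighbourhood that $\cA$ depends on'': wake-up times are arbitrary, so a first-to-wake process may see nothing but itself while a neighbour waking $10^6$ rounds later sees everyone, and no finite waiting window can close that gap (wait-freedom forbids the early process from waiting for the late one). The paper never needs nearby processes to see the same sets; instead, consistency comes from the assignment $\idv$ being observer-independent, so that every correct process's output equals the output of one fixed global run of $\cA$ under the identifier assignment $\idv:V\to[0,N^2)$, and the partial labelling produced by correct processes extends to a ball in $\cF$ by giving crashed nodes the labels from that same run. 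Your ``monotonicity of visibility'' claim is also imprecise as stated: for $v$ to see everything an earlier process $u$ saw one needs $\time(u)+\dist_G(u,v)\leq\time(v)$, not merely $\time(u)\leq\time(v)$, and it is further limited by the snapshot radius; the paper's argument carefully avoids relying on any such statement, using only the inclusion $S_v(\varrho,\theta)\subseteq S_v(\varrho',\theta')$ for $\varrho'\geq\varrho$, $\theta'\geq\theta$, and the stability of the minimizer $v^*$ under enlarging $\theta$.
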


In particular, a consequence of this theorem is that 3-coloring the $n$-node ring whose processes are given the initial knowledge of $n$ and that identifiers are in the range $[1,n]$, and where processes are sharing a consistent notion of clockwise and counterclockwise directions,  can be solved in $\frac32\log^*n+O(1)$ rounds in the \dcpld\/ model. The algorithm in~\cite{CDFRR17} performs in $\frac12\log^*n+O(1)$ rounds, but it is specific to 3-coloring the ring, while our theorem is generic, i.e., it applies to all LCL tasks. 

\subsection{Related work}

In addition to the aforementioned related work, it is worth mentioning several variants of the \local\/ model previously investigated in the literature~\cite{EMR18,GKM17,HKMS16,HKL+15}, among which the  \slocal\/ and the  \cent\/  models play an important role. 

In the \slocal\/ model~\cite{GKM17}, the nodes are processed in an arbitrary order. When a node $v$ is processed, it can see the current state of all the nodes in its $t$-hop neighborhood, for some $t\geq 0$. It can then compute its output as an arbitrary function of these states. The \slocal\/ model differs from the \async\/ model in many aspects. First, the \async\/ model considers arbitrary scheduling of the nodes, from  sequential scheduling (as in the \slocal\/ model), to parallel scheduling (as in the \local\/ model). Second, when a node is processed in the  \slocal\/ model, it has access to the identifiers of the nodes in its $t$-hop neighborhood, even if these nodes have not been processed yet. Instead, in the \async\/ model, a node has only access to the identifiers of the nodes  that are awaken in its $t$-hop neighborhood. In particular, the first nodes that wake up may have to output in complete ignorance of the identifiers of many (if not all) nodes in their $t$-hop neighborhood. This causes difficulties for transferring algorithms designed for the \local\/ model to the \async\/ model, whenever nodes heavily use the identifiers of the nodes in their vicinity for computing their outputs, like the Cole and Vishkin's algorithm~\cite{CV86}.  

In the \cent\/ model~\cite{EMR18}, the nodes are queried in arbitrary order, and a centralized (randomized) algorithm \alg\/ must answer each query by returning the status of each queried node in some (unknown) global solution. For instance, in the case of maximal independent set, \alg\/ must answer whether the queried node belongs to the independent set, or not. When a node is queried, \alg\/ can probe nodes in the vicinity of the queried node for forging its answer. It is  requested that the number of probed nodes should be sublinear in the size $n$ of the network. More importantly,  \alg\/ is also itself subject to severe restriction, and must run with a memory of sublinear size. Yet, \alg\/ must answer correctly to the queries, w.h.p., that is, all answers must be consistent w.r.t.~the task to be solved. The \cent\/ model obviously differs from the \async\/ model in many aspects, including the fact that, when a node is probed, it reveals its identity, whilst, in the \async\/ model,  only awake nodes reveal their identities. 

The ability of solving tasks such as consensus as a function of various parameters including processor synchrony, communication synchrony, message order, unicast vs. broadcast transmission, and atomicity of the send/receive operations has been the source of a vast literature (see, e.g.,~\cite{DDS87}). Our paper fits with this line of research. In the \dcpld\/ and \async\/ models, the processors are asynchronous, the underlying communication infrastructure is synchronous, the links are FIFO, and the transmissions  are unicast and atomic. 

Finally, it is also worth stressing that fault-tolerance has not been ignored in the context of network computing. In particular, self-stabilization deals with \emph{transient} failures, and is very much connected to local algorithms (see, e.g., \cite{BEG18}), and to local decision algorithms (see, e.g.,~\cite{BFPS14}). 

\section{The asynchronous  {\small LOCAL}  model(s)}

This section describes the \dcpld\/ model and  the \async\/ model. The former was introduced in~\cite{CDFRR17}, and the latter, introduced in this paper, is a variant of  the former, conceptually simpler to handle, and to analyse. We show that the \async\/ model is not stronger than the \dcpld\/ model in symmetric graphs. All the results further in this paper are then expressed in the  \async\/ model.

\subsection{The {\small DECOUPLED} model}
\label{sec:def-decoupled}

As for the \local\/ model, the \dcpld\/ model~\cite{CDFRR17} assumes $n$ nodes connected by a network modeled as a simple connected graph $G=(V,E)$, with nodes having distinct identities. We also assume that the edges of $G$ have port numbers (this does not add power to the \local\/ model). At every node~$v$, the  $\deg(v)$  edges incident to $v$ are labeled by  $\deg(v)$ distinct integers in the set $\{1,\dots,\deg(v)\}$, in a way similar to the model in~\cite{A80,YK96}.  The \dcpld\/ model differs from the \local\/ model in two aspects, by distinguishing the \emph{computation} layer from the \emph{communication} layer, and by assuming \emph{asynchronous computation}. 

A vertex $v\in V$ supports two entities: a \emph{router} $r_v$, and a \emph{process}~$p_v$. The former is in charge of routing messages passing through $v$, of delivering messages to the process $p_v$, and of forwarding messages emitted by $p_v$. Each process $p_v$ has an identifier, $\id(v)$, which is unique in the system. Routers do not have identifiers. However, at a degree-$d$ node~$v$, the $d$ communication links incident to the router $r_v$ are arbitrarily labeled with distinct integers from~1 to~$d$. These labels are called \emph{port numbers}. Note that an edge may have two different port numbers at its two extremities. These port numbers are solely for allowing every process to become aware of the structure of the network in its vicinity. In addition, port~0 designates the communication interface between the router~$r_v$ and the process~$p_v$. 

The \dcpld\/ model shares with the \local\/ model the fact that \emph{communications are synchronous}, that is, the communication network performs in \emph{rounds}. At each round, the messages placed in the input buffers of every router $r_v$ are forwarded through all  links incident to node $v$, i.e., placed in the input buffers of all routers $r_w$, for every node $w$ adjacent to $v$ in $G$. Any message placed in the input buffet of a router $r_w$ is immediately transferred to  the process $p_w$, during the same round. 

More specifically, the messages transferred  to a process $p_v$ are placed in an input buffer $b^{in}_v$, in which $r_v$ can write at every round, and in which $p_v$ can read at any point in time. Similarly, whenever $p_v$ aims at sending information through the network, it places these information in an output buffer $b^{out}_v$. At each round, the router $r_v$ takes all data stored in $b^{out}_v$ (if any), groups them as one message, and forwards this message to all neighbors of node~$v$. In particular, every message sent by a process at the beginning of a round is received by all neighboring processes at the end of the round, whenever the latter are awake. 

The processes $p_v$, $v\in V$, may perform at different speed, which may even vary along with time. In particular, each process wakes up at an arbitrary point in time, takes steps at arbitrary speed, and may even stop taking steps, in which case it is said to have \emph{crashed}. This behavior is often referred to as  \emph{wait-free} computing. However, while a process cannot wait for data from another process (since it cannot know whether the latter has crashed, or is just slow), every process has access to the clock governing the synchronous communications. Thus, in particular, a process can wait for any prescribed number of rounds. Once awake, every process can dequeue all messages available in its input buffer, if any, at any point in time, and fill up its output buffer at its own pace. 

Messages placed in the output buffer $b_v^{out}$ of every process $p_v$ are delivered by \emph{flooding} the network (every incoming message is sent through every outgoing link, but the one it arrived from).  Moreover, as for, e.g., the Internet, routers are capable of performing non-trivial operations on the messages, and, in particular, they can modify the message headers. In the  \dcpld\/ model, when a message is entering a router $r_v$ through port $i$, and is forwarded to port $j$ at the next round, the pair $(i,j)$ is systematically added to the header for keeping the history of the route followed by that message. Moreover, when a router $r_v$ dequeues $b_v^{out}$ for forming a message to be sent through the network, it tags this message with the round at which it is emitted. 

The \dcpld\/ model also assumes that the first instruction of a process $p_v$ is to place its identifier into its output buffer $b_v^{out}$. We denote by $\time(v)$ the round at which this event occurs. The message sent by $p_v$ is thus the pair $(\id(v),\time(v))$. Note that $p_v$ is aware of $\time(v)$. Then $p_v$ waits for a prescribed number of rounds, and accumulates messages of the form $(\id(w), \time(w))$ sent by nodes $w$ in the network. After this prescribed number of rounds, $p_v$ computes its outputs based on the received data. Observe that a process $p_v$ that wakes up late may receive messages $(\id(w), \time(w))$ from nodes $w$ that are at unbounded distance in the network. On the other hand, a process $p_v$ that wakes up early may receive no messages for an arbitrary long period of time, depending on when the other processes wake up, if they wake up at all (since they may have crashed).  

\begin{definition}[Casta\~neda et al. \cite{CDFRR17}]\label{def:complexity-decouppled}
The \emph{round-complexity} of an algorithm  in the \dcpld\/ model is the maximum, taken over all correct processes in the network, of the number of rounds the process waits before it is able to produce its output.   
\end{definition}

\subsection{The  \async\/ model}

We define the \async\/ model, which assumes the same setting as the \dcpld\/ model regarding asynchronous computing, but provides each process with an atomic \emph{snapshot} instruction that can be used only once, and that is the only communication mechanism. More specifically, as in the \dcpld\/ model, let $\time(v)$ be the round during which process $p_v$ wakes up. An algorithm in the \async\/ model is bounded to proceed in two phases identical to those performed in the \local\/ model, as described in Section~\ref{subsec:locality}.  

\begin{center}
\fbox{
\begin{minipage}{15cm}
\centerline{\sl Algorithm in the \async\/ model}
\begin{description}

\item[Phase 1:] Whenever a process $p_v$ wakes up, it performs instruction $\snap(t)$ at round $\time(v)$, for some $t\geq 0$, which may depend on~$n$ or on an upper bound $N$ of $n$. In a graph~$G$, this instruction instantaneously returns to $p_v$ the structure of the ball $B_G(v,t)$ of radius $t$ centered at $v$ in $G$, with the identifiers and wake up times of all processes $p_w$ satisfying 
\[
\dist_G(v,w)\leq t\;\;\;\mbox{and}\;\;\; \time(w)+\dist_G(v,w)\leq \time(v)+t.
\] 
\vspace{-5ex}
\item[Phase 2:] Process $p_v$ performs individual computations, taking solely as input the result of the snapshot instruction of Phase~1, and eventually outputs. 
\end{description}
\end{minipage}
}
\end{center}

\begin{definition}
The \emph{round-complexity} of an algorithm in the \async\/ model is the value of the parameter~$t$ involved in the snapshot operation.  
\end{definition}

The next lemma shows that the \async\/ model is not stronger than the \dcpld\/  model in symmetric networks. Recall that an \emph{automorphism} of a network $G=(V,E)$ is a one-to-one mapping $\phi:V\to V$ such that $\{x,y\}\in E \iff \{\phi(x),\phi(y)\}\in E$. An automorphism $\phi$ is said to be \emph{port-preserving} if, for every $x\in V$, and every $\{x,y\}\in E$, the port number at node $\phi(x)$ of the edge $\{\phi(x),\phi(y)\}$ is equal to the port number at $x$ of the edge $\{x,y\}$. 

\begin{definition}
A network $G=(V,E)$ is said to be  \emph{symmetric} if, for every two nodes $v,w \in V$, there exists a port-preserving automorphism $\phi$ of $G$ with $\phi(v)=w$. 
\end{definition}

Note that symmetric networks include most of the usual networks such as cliques, rings, toruses, hypercubes, and Cayley graphs in general. For instance, the networks considered  in~\cite{CDFRR17}, that is, rings in which every edge has distinct port numbers at its two extremities (i.e., there is a consistent notion of ``left'' and ``right''), are symmetric. A torus (i.e., a grid with wrapearound links) provided with a consistent notion of north, south, east, and west is also symmetric. A simple but crucial observation is that, if every node is given the initial knowledge that it is in a symmetric network (e.g., in a ring), it does not need to collect the \emph{structure} of the ball around it. On the other hand, every node may still need to collect the identifiers of the nodes in the ball around it. 

The following result show that the abstract \async\/ model is not stronger than the operational \dcpld\/ model, in symmetric networks. 

\begin{lemma}\label{lem:equivalence}
Assume that all nodes are initially aware that they belongs to a symmetric network~$G$. For any $t\geq 0$, any algorithm performing in $t$ rounds in $G$ under the  \async\/ model can be implemented to run in $t$ rounds in $G$ under the \dcpld\/ model. 
\end{lemma}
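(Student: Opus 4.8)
The plan is to simulate any \async\ algorithm $\cA$, which relies on a single call to $\snap(t)$, by a \dcpld\ algorithm $\cA'$ in which every process simply waits exactly $t$ rounds after waking up. The guiding idea is that the snapshot taken by $p_v$ at round $\time(v)$ is precisely the information that flooding delivers to $p_v$ within $t$ rounds: a process $p_w$ that floods its pair $(\id(w),\time(w))$ at round $\time(w)$ sees this message advance one hop per round, so the first copy reaches the input buffer $b^{in}_v$ at round $\time(w)+\dist_G(v,w)$. Hence $p_v$, while awake over the window from $\time(v)$ to $\time(v)+t$, collects exactly the pairs $(\id(w),\time(w))$ of those $w$ with $\time(w)+\dist_G(v,w)\le \time(v)+t$, which is precisely the timing condition in the definition of $\snap(t)$.

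First I would specify $\cA'$: upon waking at round $\time(v)$, process $p_v$ executes the mandatory first \dcpld\ instruction of placing $(\id(v),\time(v))$ into its output buffer, then idles for $t$ rounds, accumulating every message that arrives at $b^{in}_v$ during that interval. Since the router $r_v$ is failure-free and always active, it buffers in $b^{in}_v$ even the messages that reach $v$ before $p_v$ wakes up, so no early arrival is lost; this matches the fact that the snapshot imposes no lower bound on $\time(w)+\dist_G(v,w)$. I would then keep only the messages whose route history has length at most $t$, i.e.\ those $w$ with $\dist_G(v,w)\le t$, discarding the copies that flooding may have carried from farther nodes. This leaves exactly the identities and wake-up times that $\snap(t)$ returns.

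The one remaining thing $\snap(t)$ provides that $\cA'$ must still recover is the placement of each such $w$ inside the ball $B_G(v,t)$, together with the ball's structure. Here symmetry does the heavy lifting: as observed just before the lemma, since $p_v$ is aware that it lies in a symmetric network, it already knows the entire structure of $B_G(v,t)$ and the port numbering at each of its nodes without any communication. To locate $w$, I would use the route history $(i_1,j_1),\dots,(i_k,j_k)$ carried by the first-arriving copy of $w$'s message, which records, at each router along a shortest $w$-to-$v$ path, the incoming and outgoing port numbers. Starting from $v$ and the known structure of the ball, this sequence of port transitions can be traced backwards, hop by hop, to pin down the exact vertex of $B_G(v,t)$ occupied by $w$, while the length $k=\dist_G(v,w)$ simultaneously confirms the distance used in the filtering step. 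When several copies of $w$'s message arrive via different flooding paths, $p_v$ keeps the one with the shortest route history.

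I expect the reconstruction of positions from route histories to be the main obstacle, since it is the only place where the \async\ and \dcpld\ primitives genuinely differ: the snapshot hands back positions directly, whereas in \dcpld\ they must be decoded from port-labelled routes, and this decoding works only because symmetry supplies the ambient structure for free. Once this identification is in place, the two models feed $p_v$ the identical data --- ball structure, identities, and wake-up times --- so $p_v$ can run Phase~2 of $\cA$ verbatim and output the same label. As $p_v$ waits exactly $t$ rounds, the round-complexity of $\cA'$ in the \dcpld\ model equals $t$, matching $\cA$, which proves the lemma.
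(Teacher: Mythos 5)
Your proposal is correct and takes essentially the same approach as the paper: simulate $\snap(t)$ in the \dcpld\ model by having each process flood $(\id(v),\time(v))$ at wake-up and wait exactly $t$ rounds, then argue that synchronous flooding delivers a pair from $w$ to $b^{in}_v$ at round $\time(w)+\dist_G(v,w)$, so the buffer after $t$ rounds contains precisely what the snapshot's timing condition requires. Your extra details --- discarding messages whose shortest route history exceeds $t$, and decoding positions in $B_G(v,t)$ from port-labelled route histories using the known symmetric structure --- are careful elaborations of steps the paper leaves implicit behind its remark that symmetry makes the ball structure known for free.
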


\begin{proof}
Let $\cA$ be a $t$-round algorithm in the  \async\/ model. This algorithm is transformed into an algorithm~$\cB$ in the \dcpld\/ model. In $\cB$, instead of performing the  operation $\snap(t)$ at wake up time, every process $p_v$ places the pair $(\id(v),  \time(v))$ in its output buffer~$b_v^{out}$, and waits for $t$ rounds. At round $\time(v)+t$, it collects all messages available in its input buffer $b_v^{in}$. After this point, $p_v$ executes the same instructions as those in Phase~2 of~$\cA$. 

To show correctness, it is sufficient to prove that the set of  messages available in the input buffer $b_v^{in}$ of $p_v$ after $t$ rounds contains all the information required by $p_v$ in  Phase~2 of~$\cA$. 

-- First, let $w$ be a node such that $\dist_G(v,w)\leq t$, and assume that $\time(w)\leq \time(v)+t-\dist_G(v,w)$. At round $\time(w)$, the pair $(\id(w),\time(w))$ started to be flooded from $w$. It reached $b_v^{in}$ at round $\time(w)+\dist_G(v,w)\leq \time(v)+t$. Therefore, the set of all messages available in the input buffer $b_v^{in}$ of $p_v$ after $t$ rounds of $\cB$ are those required by $p_v$ in Phase~2 of~$\cA$. 

-- Second, recall from Section~\ref{sec:def-decoupled} that, in the \dcpld\/ model, the port numbers of the communication links traversed by every message are appended to the message. As a consequence, since the network $G$ is symmetric, every node can compute where are the sources of the received messages positioned in~$G$. For instance, if $G$ is a torus, with links consistently labeled north (N), south (S), east (E), west (W) at every node, then a sequence of port numbers (e.g., SSWSW) corresponding to a message received at a node $v$ from a node $w$ allows $v$ to determine the position of $w$ in its ball of radius~$t$. 

It follows from the above that the set of all messages available in the input buffer $b_v^{in}$ of $p_v$ after $t$ rounds of $\cB$ contains all the information required by $p_v$ for performing Phase~2 of~$\cA$. That is, $v$ can reconstruct from these messages what would be the result of a snapshot of its ball of radius~$t$. This completes the proof of the lemma. 
\end{proof}

In the case of non-symmetric networks,  the \async\/ model may well be stronger than the \dcpld\/ model because it is not clear how to reconstruct the structure of a ball on the basis of the received messages in the \dcpld\/ model. Nevertheless, the reader should keep in mind that the \async\/ model is an abstract model that facilitates the analysis of the algorithms. It is by no mean claimed to be practical. Extending the \dcpld\/ model to arbitrary networks, beyond symmetric networks, is an interesting issue. 

\section{Proofs of Lemma~\ref{lem:main} and Theorem~\ref{theo:main} }

In this section, we prove the results stated in Section~\ref{sec:ourresults}, that is, Lemma~\ref{lem:main} and Theorem~\ref{theo:main}. 

\paragraph{Proof of Lemma~\ref{lem:main}.~}

The core of the proof of   is a reassignment of new identifiers to the nodes, so that every awake node can learn the new identifiers of all processes in its vicinity, including processes that are not yet woken up. These new identifiers are in the range $[0,N^2)$, whenever the original identifiers are in the range $[0,N)$. 

Let $(L,\cF)$ be an LCL task. Assume that there exists an algorithm $\cA$ solving $(L,\cF)$  in the \local\/ model, and assume that $\cA$ runs in $t(N)$ rounds in $n$-node graphs whenever the processes are initially aware of the upper bound  $N$ for $n$, and  that all identifiers are in the range $[0,N)$. Let $G=(V,E)$ be an $n$-node graph, and let $N$ be the upper bound on $n$ and on the range of identifiers given to the nodes. Let 
\[
\tau=t(N^2).
\]
First, we show how to assign \emph{local}  identifiers to the nodes in each ball of radius $\tau$. 

\begin{claim}\label{lem:IDlocales}
There exists a collection $\{\idvl{v}, v\in V\}$ of functions satisfying:
\begin{enumerate}
\item for every node $v\in V$, $\idvl{v}:B_G(v,\tau)\to[0,N^2)$ can be computed given solely $\id(v)$ and the structure of the ball $B_G(v,\tau)$; 
\item for every two nodes $v,v'\in V$, and for every two nodes $u\in B_G(v,\tau)$ and $u'\in B_G(v',\tau)$, if $\idvl{v}(u)=\idvl{v'}(u')$ then $v=v'$ and $u=u'$. 
\end{enumerate}
\end{claim}
\begin{proof}
Let us first address Item~1. Given the structure of the ball  $B_G(v,\tau)$, one can compute $\idvl{v}$ by enumerating the nodes in $B_G(v,\tau)$ according to some traversal of that ball.  For instance, one can perform a breadth-first search traversal of $B_G(v,\tau)$, traversing the incident edges with smaller port-numbers first. The nodes of $B_G(v,\tau)$ are then enumerated according to the order in which they are visited. Let $u_0,\dots,u_{k-1}$ be this enumeration, with $u_0=v$, and $k=|B_G(v,\tau)|$. We set 
\[
\idvl{v}(u_i) \; \overset{\mbox{\tiny def}}{=} \;  i + \id(v)\cdot N
\]
for every $i=0,\dots,k-1$. Since $0\leq \id(v) \leq N-1$ and $0\leq i < k\leq n \leq N$, Item~1 is satisfied.  

Let $v, v'\in V$, and let $u\in B_G(v,\tau)$ and $u'\in B_G(v',\tau)$. If $\idvl{v}(u)=\idvl{v'}(u')$, then, since $0\leq i < N$, we have $\id(v)=\id(v')$, and thus $v=v'$. Moreover, the index provided to $u$ by $v$  is equal to the index provided to $u'$ by $v$.  It follows that $u=u'$. Thus, Item~2 holds as well, which completes the proof of Claim~\ref{lem:IDlocales}. 
\end{proof}

Recall that, in the \async\/ model, if a process $p_v$ at node $v$ wakes up, and performs instruction $\snap(\varrho)$ for some integer $\varrho \geq 0$, at round $\time(v)$, it gets the structure of the ball $B_G(v,\varrho)$ of radius $\varrho$ centered at $v$ in $G$, with the identities and wake up times of all processes $p_w$ in this ball satisfying 
$
\time(w)+\dist_G(v,w) \leq \varrho+\time(v).
$
Let $\varrho$ and $\theta$ be two non-negative integers. For a fixed arbitrary scheduling of the processes, we define the \emph{snapshot sets} of a node $v$ as 
\[
S_v(\varrho,\theta) = \{\id(w): w\in V, \; \dist_G(v,w)\leq \varrho, \;\;\mbox{and}\;\; \time(w)+\dist_G(v,w) \leq \varrho+\theta\}.
\]
Note that if we would assume that $S_v(\varrho,\theta)$ also includes the structure of $B_G(v,\varrho)$, then we would have 
\[
\snap_v(\varrho)=S_v(\varrho,\time(v)).
\]
For the sake of simplifying notations, we sometimes  view $S_v(\varrho,\theta)$  as a set of nodes, rather than a set of node identifiers. Note that, for every node $v$, 
\[
\varrho'\geq \varrho \;\mbox{and} \; \theta'\geq \theta \; \Longrightarrow \; S_v(\varrho,\theta) \subseteq S_v(\varrho',\theta').
\]
Moreover, given $S_v(\varrho',\theta')$, one can extract $S_v(\varrho,\theta)$ out of it. We now show that there exists a function $\idv$ assigning (global) virtual identities to the nodes. 

\begin{claim}\label{lem:IDglobales}
There exists a function $\idv:V\to[0,N^2)$ satisfying: 
\begin{enumerate}
\item $\idv$ gives unique identifiers to the nodes, i.e., $\idv(v)\neq \idv(v')$ for every two nodes $v\neq v'$ in $G$, in the range $[0,N^2)$; 
\item for every $v \in V$, and for every $\theta\geq 0$, if $S_v(\tau,\theta)\neq\varnothing$, then $\idv(v)$ can be computed given solely the snapshot set $S_v(2\tau,\theta)$. 
\end{enumerate}
\end{claim}

\begin{proof}
Let us fix $v\in V$. For defining $\idv(v)$, we consider two cases. The first case is only for the purpose of completeness, as it does not correspond to operationally practical scenarios. It assumes that $S_v(\tau,\theta)=\varnothing$ for every integer $\theta\geq 0$. Under this assumption, we set 
\[
\idv(v) \; \overset{\mbox{\tiny def}}{=} \;  \idvl{v}(v) = \id(v)\cdot N.
\]
In the operationally relevant case where  there exists an integer $\theta\geq 0$ such that $S_v(\tau,\theta)\neq\varnothing$, we set 
\[
\theta^*\; \overset{\mbox{\tiny def}}{=} \; \min\{\theta \in \mathbb{N}: S_v(\tau,\theta)\neq \varnothing\}.
\]
and
\[
v^*\; \overset{\mbox{\tiny def}}{=} \; \argmin\{(\time(u)+\dist_G(u,v),\id(u)), u\in S_v(\tau,\theta^*)\}.
\]
That is, $v^*$ is the node in  $S_v(\tau,\theta^*)$ with smallest sum of wake up time plus distance to~$v$, where ties are broken using the identities. Then we finally set 
\[
\idv(v) \; \overset{\mbox{\tiny def}}{=} \;  \idvl{v^*}(v).
\]
That is, the virtual identifier of $v$ is the local identifier given to $v$ by the node $v^*$. 

\medskip 

We have $\idv(v)\neq \idv(v')$ for every $v\neq v'$ because, by Claim~\ref{lem:IDlocales}, for every four nodes $v,v',w,w'$,  the equality $\idvl{w}(v)= \idvl{w'}(v')$ implies $v=v'$. Thus, Item~1 holds. 

\medskip 

Let $v \in V$, and $\theta\geq 0$. Let us assume that $S_v(\tau,\theta)\neq\varnothing$, and let us be given the snapshot set $S_v(2\tau,\theta)$. Let
\[
v^*_\theta \; \overset{\mbox{\tiny def}}{=} \;  \argmin\{(\time(u)+\dist_G(u,v),\id(u)), u\in S_v(\tau,\theta)\}.
\]
We claim that 
\[ 
v^* =v^*_\theta.
\]
Indeed, we have $\theta^*\leq \theta$, and thus $S_v(\tau,\theta^*)\subseteq S_v(\tau,\theta)$. So, let $w\in S_v(\tau,\theta) \setminus S_v(\tau,\theta^*)$. Since $v^*\in S_v(\tau,\theta^*)$ and $w\notin S_v(\tau,\theta^*)$, we have 
\[
\time(v^*)+\dist_G(v,v^*) \leq \tau + \theta^* < \time(w)+\dist_G(v,w),  
\]
from which it follows that $v_\theta^*\neq w$. Therefore, $v_\theta^*\in S_v(\tau,\theta^*)$, and thus $v^* =v^*_\theta$, as claimed.

\medbreak

Given the snapshot set $S_v(2\tau,\theta)$, one can extract $S_v(\tau,\theta)$ out of it, and compute $v^* =v^*_\theta$. The ball $B_G(v^*_\theta,\tau)$ can be extracted from $S_v(2\tau,\theta)$ as well since $v^*_\theta\in B_G(v,\tau)$. By Claim~\ref{lem:IDlocales}, the identity of $v^*_\theta$, and the ball $B_G(v^*_\theta,\tau)$ are sufficient to compute $\idvl{v^*}(v)$, that is, to compute $\idv(v)$. Thus, Item~2 holds, which completes the proof of Claim~\ref{lem:IDglobales}. 
\end{proof}

We now have all ingredients to implement Algorithm~$\cA$ in the \async\/ model. At every node~$v$, the algorithm proceeds as follows: 
\begin{description}
\item[Phase 1:] When process $p_v$ wakes up, it performs $\snap(3\,\tau)$ at round $\time(v)$;
\item[Phase 2:] Then, process $p_v$ internally simulates a run of $\cA$ on $B_G(v,\tau)$ with the identifiers in $[0,N^2)$ provided by $\idv$ to the nodes in $B_G(v,\tau)$.  
\end{description}
Note that $p_v$ can  execute Phase~2. Indeed, $p_v$ can compute $\idv(v')$ for every node $v'\in B_G(v,\tau)$. This is because, by Claim~\ref{lem:IDglobales}, since $v\in S_{v'}(\tau,\time(v))$, it is sufficient for $v$ to know $S_{v'}(2\tau,\time(v))$, which is satisfied as  
\[
\snap_v(3\,\tau)\supseteq S_{v'}(2\tau,\time(v))
\] 
for every $v'\in B_G(v,\tau)$. 

It remains to establish the correctness of the above implementation of $\cA$ in the \async\/ model. This is immediate once we observe that the output of every node $v$ is equal to the output of $\cA$ at $v$ in the \local\/ model, whenever the identities of the nodes are not provided by the one-to-one function $\id:V\to [0,N)$ but by the one-to-one function $\idv:V\to [0,N^2)$. It follows that any partial solution provided by the set of correct processes can be extended to a global solution, by providing every faulty process with the label that this process would have produced if it had not crashed.  This completes the proof of Lemma~\ref{lem:main}. \qed

\bigskip

Theorem~\ref{theo:main} directly follows from Lemma~\ref{lem:main} and Lemma~\ref{lem:equivalence}. 

\section{Conclusion}

This paper is carrying on the line of research aiming at bridging the gap between the study of asynchrony in distributed computing, and the study of locality in network computing. The \async\/ model introduced in this paper simplifies the \dcpld\/ model defined in~\cite{CDFRR17}, and extends it to arbitrary networks. Using the  \async\/ model, we prove that, for LCL tasks (which include the vast majority of the problems  studied in local network computing),  asynchrony and crash failures do not hurt the power of computation of the \local\/ model, as long as asynchrony and failures impact computation only, while communications remain synchronous and failure-free. Incorporating asynchrony in the communications too can be done thanks to synchronizers. However, incorporating asynchrony \emph{and} crash failures in both computation \emph{and} communications appears to be challenging in the context of network computing. 

\bigbreak

\prgph{Acknowledgement:} \; The fourth author is thankful to Alkida Balliu for fruitful discussions in connection with the topic of this paper. 

\bibliographystyle{plain}


%
%
%

\end{document}